\documentclass{article}
\usepackage{amsmath,amssymb,amsthm}
\usepackage{authblk}
\usepackage{graphicx}
\usepackage{natbib}

\def\Prob{\mathrm{P}}
\def\E{\mathrm{E}}
\def\N{\mathbb{N}}
\def\R{\mathbb{R}}

\DeclareMathOperator*{\argmax}{argmax}
\DeclareMathOperator*{\argmin}{argmin}
\DeclareMathOperator*{\abs}{abs}
\usepackage[a4paper,left=3cm,right=3cm,top=3cm,bottom=3cm]{geometry}

\newtheorem{theorem}{Theorem}
\newtheorem{lemma}[theorem]{Lemma}
\theoremstyle{definition}
\newtheorem{definition}[theorem]{Definition}

\author[*]{Patrick Rubin-Delanchy}
\author[**]{Carey E. Priebe}
\author[**]{Minh Tang}
\affil[*]{University of Oxford and Heilbronn Institute for Mathematical Research, U.K.}
\affil[**]{Johns Hopkins University, U.S.A.}
\date{}
\title{Consistency of adjacency spectral embedding for the mixed membership stochastic blockmodel}

\begin{document}
\maketitle
\begin{abstract}
The mixed membership stochastic blockmodel is a statistical model for a graph, which extends the stochastic blockmodel by allowing every node to randomly choose a different community each time a decision of whether to form an edge is made. Whereas spectral analysis for the stochastic blockmodel is increasingly well established, theory for the  mixed membership case is considerably less developed. Here we show that adjacency spectral embedding into $\R^k$, followed by fitting the minimum volume enclosing convex $k$-polytope to the $k-1$ principal components, leads to a consistent estimate of a $k$-community mixed membership stochastic blockmodel. The key is to identify a direct correspondence between the mixed membership stochastic blockmodel and the random dot product graph, which greatly facilitates theoretical analysis. Specifically, a $2 \rightarrow \infty$ norm and central limit theorem for the random dot product graph are exploited to respectively show consistency and partially correct the bias of the procedure.
\end{abstract}
\section{Introduction}
Network modelling is a thriving area of statistics, as is evident from the number of recent publications in top-tier statistical journals, e.g. \citet{lei2015,gao2015rate,lei2016goodness,klopp2017oracle,caron2014sparse} (the last a Royal Statistical Society discussion paper, 2017) and machine-learning, e.g. \citet{addario2015exceptional,wang2016trend,ho2016latent}, as data with this type of structure are increasingly found in all areas of science and beyond \citep{barabasi2016network}. 

The stochastic blockmodel \citep{holland1983stochastic}, in particular, where two nodes have an edge with probability dependent only on the communities that each belong to, has proved very popular in practice \citep{karrer2011stochastic} and a large body of statistical theory for this model is emerging \citep{rohe2011spectral,choi2012stochastic,lei2015,lei2016goodness}. An important result, greatly influencing the research of this paper, is the proof that a previously very popular data analysis technique called spectral clustering (spectral embedding followed by $k$-means clustering) \citep{von2007tutorial} provides a consistent estimate of the stochastic blockmodel, as shown by \citet{rohe2011spectral} and \citet{lei2015}. 

The mixed membership stochastic blockmodel \citep{airoldi2008mixed} is a natural extension of the stochastic blockmodel whereby each node can belong to a number of different communities, and is almost equally as popular, at least judging by the number of citations (over 1100 at the time of writing). However, here it seems that formal statistical theory lags behind --- in particular the connection between this model and spectral embedding is less clear. Yet, because of the afore-mentioned results for the stochastic blockmodel, it is natural to expect that a spectral approach could be profitable, and this leads us to the subject of this paper. 

Our main result is that, under reasonable conditions, \emph{spectral embedding of the adjacency matrix} \citep{athreya2016limit}, followed by fitting the \emph{minimum volume enclosing convex $k$-polytope} \citep{lin2016fast} to the $k-1$ principal components, provides a consistent estimate for the undirected mixed membership stochastic blockmodel. 






A number of other estimation techniques for this model have been developed, ranging from the variational inference schemes proposed by the original authors \citep{airoldi2008mixed,gopalan2013efficient}, tensor approaches to subgraph counts \citep{anandkumar2014tensor}, full Markov Chain Monte Carlo methods (available in online code repositories). Spectral techniques for a number of slightly modified models, with different consistency results, are available in a few other (apparently yet unpublished) papers \citep{zhang2014detecting,mao2016provable}. One advantage of our approach is that it is very simple, for example, requiring about three lines of code to estimate $B$ (excluding implementations of the eigendecomposition and fitting the polytope).

A key insight of this paper is recognising a direct connection between the mixed membership stochastic blockmodel and the \emph{random dot product graph} \citep{nickel06,young2007random,athreya2016limit}. This has allowed us to present a relatively developed theoretical analysis at very little cost: first, from a $2 \rightarrow \infty$ norm result given in \citet{lyzinski2017community} it is straightforward to show that our proposed procedure is consistent. Second, using a central limit theorem for spectral embedding given by \citet{athreya2016limit}, we are able to quantify and partly correct its finite-sample bias.

\section{Mixed membership stochastic blockmodels as random dot product graphs}\label{sec:mmsbm_as_rdpg}
The focus of this article is on modelling a random, undirected, simple graph with no self-loops on $n \in \N$ nodes. Any graph considered has its nodes labelled $1, \ldots, n$ and is identified via its \emph{adjacency matrix}, which is a symmetric, hollow matrix $A \in \{0,1\}^{n \times n}$ where $A_{ij}=1$ when the nodes $i$ and $j$ have an edge. 

\citet{airoldi2008mixed} introduce a \emph{mixed membership stochastic blockmodel}, which is here modified so as to generate undirected graphs.
\begin{definition}[Mixed membership stochastic blockmodel --- undirected version] 
Let $k \in \N$, $B \in [0,1]^{k \times k}$ and symmetric, and $\alpha \in \R_+^k$. We say that $(\Pi, A) \sim \text{MMSBM}(B,\alpha)$ if the following hold. First, let $\pi_1, \ldots, \pi_n \overset{i.i.d.} \sim \text{Dirichlet}(\alpha)$ and define $\Pi = [\pi_1, \ldots, \pi_n]^T \in \left(\boldsymbol{\Delta}^{k-1}\right)^n \subset [0,1]^{n \times k}$, where $\boldsymbol{\Delta}^m$ denotes the standard $m$-simplex.
Second, the matrix $A \in \{0,1\}^{n \times n}$ is defined to be a symmetric, hollow matrix such that for all $i < j$, \emph{conditional on $\Pi$},
\[A_{ij} \overset{ind}{\sim} \text{Bernoulli}\left(B_{z_{i\rightarrow j}, z_{j\rightarrow i}}\right),\]
where 
\[z_{i\rightarrow j} \overset{ind}{\sim} \text{multinomial}(\pi_i) \quad \text{and}\quad  z_{j\rightarrow i} \overset{ind}{\sim} \text{multinomial}(\pi_j).\]
\end{definition}

In a completely separate sequence of papers, \citet{nickel06,young2007random,athreya2016limit} introduce and analyse the following \emph{random dot product graph} model:
\begin{definition}[Random dot product graph]\label{def:RDPG}
Let $F$ be a distribution on a convex set $\mathcal{X} \subset \R^d$, such that $x^T x' \in [0,1]$ for all $x,x' \in \mathcal{X}$. We say that $(X,A) \sim \text{RDPG}(F)$ if the following hold. First, let $X_1, \ldots, X_n \overset{i.i.d} \sim F$ and define:
\[X = [X_1, \ldots, X_n]^T \in \mathcal{X}^{n \times d} \quad \text{and} \quad P = X X^T.\]
Second, the matrix $A \in \{0,1\}^{n \times n}$ is defined to be a symmetric, hollow matrix such that for all $i < j$, \emph{conditional on $P$},
\[A_{ij} \overset{ind}{\sim} \text{Bernoulli}\left(P_{ij}\right).\]
\end{definition}
As mentioned in the introduction, a key new notion of this paper is the connection between the two models, now made explicit.

\begin{lemma}[Mixed membership stochastic blockmodels are random dot product graphs]\label{lem:mmsbm_rdpg}
Let $(\Pi, A) \sim \text{MMSBM}(B,\alpha)$, and assume that $B$ is non-negative definite, with eigendecomposition $B=U \Sigma U^T$. Let $X=\Pi U \Sigma^{1/2}$. Then, $(X,A) \sim \text{RDPG}(F)$, for some $F$, with support the convex hull of the columns of $\Sigma^{1/2} U^T$.
\end{lemma}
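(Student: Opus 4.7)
The plan is to take the obvious candidate $X_i := \Sigma^{1/2} U^T \pi_i$ (the $i$th row of $\Pi U \Sigma^{1/2}$) and verify the three requirements of Definition \ref{def:RDPG}: that the $X_i$ are i.i.d.\ from a distribution $F$ supported on a convex set $\mathcal{X}$ with $x^T x' \in [0,1]$; that the described convex hull is the correct support; and that conditional on $X$, the entries $A_{ij}$ with $i<j$ are independent Bernoullis with mean $X_i^T X_j$. The last item is the only one requiring a small computation, and it amounts to marginalising out the latent community assignments $z_{i\to j}, z_{j\to i}$.

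First, since $\pi_1,\ldots,\pi_n$ are i.i.d.\ $\text{Dirichlet}(\alpha)$ and $X_i = \Sigma^{1/2} U^T \pi_i$ is a fixed linear function of $\pi_i$, the vectors $X_1,\ldots,X_n$ are i.i.d.\ from the pushforward distribution $F$ of $\text{Dirichlet}(\alpha)$ under $\pi \mapsto \Sigma^{1/2} U^T \pi$. Because every $\pi \in \boldsymbol{\Delta}^{k-1}$ is a convex combination $\sum_a \pi_a e_a$ of the standard basis vectors, its image is the corresponding convex combination $\sum_a \pi_a (\Sigma^{1/2} U^T e_a)$ of the columns of $\Sigma^{1/2} U^T$; conversely every such convex combination is attained. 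Hence the support of $F$ is exactly the convex hull $\mathcal{X}$ of those columns, which is itself a convex set.

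Second, I compute the inner product. Using $B = U \Sigma U^T$ (and $U^T U = I$ on the support of $\Sigma$),
\[
X_i^T X_j \;=\; \pi_i^T U \Sigma^{1/2} \Sigma^{1/2} U^T \pi_j \;=\; \pi_i^T B \pi_j \;=\; \sum_{a,b} \pi_{ia} B_{ab} \pi_{jb}.
\]
Since $\pi_i, \pi_j \in \boldsymbol{\Delta}^{k-1}$ and $B_{ab} \in [0,1]$, this is a convex combination of elements of $[0,1]$, hence lies in $[0,1]$; so the RDPG admissibility condition $x^T x' \in [0,1]$ for all $x,x' \in \mathcal{X}$ holds.

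Finally I verify the conditional law of $A$. Conditional on $\Pi$ (equivalently, on $X$, since $X$ is a deterministic function of $\Pi$), the MMSBM specifies that the $A_{ij}$ for $i<j$ are independent, each drawn by first sampling $z_{i\to j} \sim \text{multinomial}(\pi_i)$ and $z_{j\to i} \sim \text{multinomial}(\pi_j)$ independently, then sampling $A_{ij} \sim \text{Bernoulli}(B_{z_{i\to j}, z_{j\to i}})$. Marginalising out $(z_{i\to j}, z_{j\to i})$ gives
\[
\Prob(A_{ij}=1 \mid \Pi) \;=\; \sum_{a,b} \pi_{ia}\pi_{jb} B_{ab} \;=\; \pi_i^T B \pi_j \;=\; X_i^T X_j,
\]
and conditional independence of the $A_{ij}$ across pairs is preserved by this marginalisation. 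With $P = X X^T$, these are exactly the requirements in Definition \ref{def:RDPG}, so $(X,A) \sim \text{RDPG}(F)$. There is no real obstacle; the only subtlety worth flagging is the use of the non-negative definiteness of $B$ to make $\Sigma^{1/2}$ well-defined as a real matrix, which is what allows the quadratic form $\pi_i^T B \pi_j$ to be written as a genuine inner product $X_i^T X_j$.
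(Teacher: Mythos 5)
Your proposal is correct and follows essentially the same route as the paper's proof: define $X_i = \Sigma^{1/2}U^T\pi_i$, marginalise out $z_{i\to j}, z_{j\to i}$ to get $\Prob(A_{ij}=1\mid\Pi)=\pi_i^T B\pi_j = X_i^T X_j$, and identify the support of $F$ as the convex hull of the columns of $\Sigma^{1/2}U^T$. Your additional explicit check that $x^T x' \in [0,1]$ on $\mathcal{X}$ is a small, welcome bit of extra care that the paper leaves implicit.
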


\begin{proof}
The rows of $X$, transposed into vectors, are given by $X_i = \Sigma^{1/2} U^T \pi_i$, $i = 1, \ldots, n$. Therefore, $X_1, \ldots, X_n$ are i.i.d. from a distribution $F$, where $F$ is the distribution of a random vector $\Sigma^{1/2} U^T \pi$, with $\pi \sim \text{Dirichlet}(\alpha)$, which has support the convex hull of the columns of $\Sigma^{1/2} U^T$. Now,
\begin{align*}
\Prob(A_{ij} =1 \mid \Pi) &= \sum_{u=1}^{k} \sum_{v=1}^{k} \Prob(A_{ij} =1 \mid z_{i\rightarrow j}=u, z_{j\rightarrow i} = v) \Prob(z_{i\rightarrow j}=u\mid \Pi) \Prob(z_{j\rightarrow i} = v \mid \Pi), \\
& = \sum_{u=1}^{k} \sum_{v=1}^{k} B_{uv} \Pi_{iu} \Pi_{jv},
\end{align*}
so that together with conditional independence we have


\[A_{ij} \mid \Pi \overset{ind}\sim \text{Bernoulli}(\pi_i^T B \pi_j).\]
But 
\[\pi_i^T B \pi_j = (\Sigma^{1/2} U^T \pi_i)^T\Sigma^{1/2} U^T \pi_j = \left(X X^T\right)_{ij},\]
and therefore 
\[A_{ij} \mid \Pi \overset{ind}\sim \text{Bernoulli}\left\{\left(X X^T\right)_{ij}\right\} \Rightarrow A_{ij} \mid X \overset{ind}\sim \text{Bernoulli}\left\{\left(X X^T\right)_{ij}\right\}.\]
We recognise the generative model of a random dot product graph with parameter $F$.
\end{proof}



\section{Spectral estimation for the mixed membership stochastic blockmodel}
\label{sec:spectral_estimation}
The statistical problem we now consider is: given a single observation of $A$, estimate $B$ and $\alpha$, assuming $k$ and $d=\mathrm{rank}(B)$ are known. Importantly, we continue to treat $\Pi$ as \emph{random} (and unknown), whereas $B$ and $\alpha$ are \emph{fixed} (and unknown).

Additionally, motivated by Lemma~\ref{lem:mmsbm_rdpg}, we assume that $B$ is non-negative definite, so that $A$ can equally well be considered to have been generated from the model $\text{MMSBM}(B,\alpha)$ or a model $\text{RDPG}(F)$. The distribution $F$ is hereafter assumed to have the form induced by the mixed membership stochastic blockmodel, i.e.,  $F$ is the distribution of a random vector $\Sigma^{1/2} U^T \pi$, where $\pi \sim \text{Dirichlet}(\alpha)$.

Let $\mathcal S$ denote the support of $F$, that is, the convex hull of the columns of $\Sigma^{1/2} U^T$. Then $\mathcal S$ is a convex polytope with dimension $d-1$, and $l$ distinct vertices, denoted $v_1, \ldots, v_l$, where $d \leq l \leq k$. We use ``convex polytope'' rather than ``simplex'' to stress that $\mathcal{S}$ may be less than $(k-1)$-dimensional (if $d < k$).

As mentioned in the introduction, the estimation problem will be tackled using spectral embedding of the adjacency matrix:
\begin{definition}[Adjacency spectral embedding]
Given an adjacency matrix $A \in \{0,1\}^{n \times n}$, an \emph{Adjacency Spectral Embedding} (ASE) of $A$ into $\R^d$ is a matrix 
$\hat X = [\hat X_1, \ldots, \hat X_n]^T = U_A \{\abs(S_A)\}^{1/2} \in \R^{n \times d}$, where $S_A$ denotes the diagonal matrix containing, in decreasing order, the $d$ largest eigenvalues of $A$ by magnitude and $U_A$
 is a matrix containing corresponding orthonormal eigenvectors of $A$ in its columns. 
\end{definition}

Figure~\ref{fig:pipeline}a) shows the ASE into $\R^3$ of a simulated realisation of the mixed membership stochastic blockmodel, with $n=5000$, $k=3$, $\alpha = (1,1,1)$ and
\begin{equation}
B = \left[\begin{array}{ccc}0.9 & 0.2 & 0.3\\ 0.2 & 0.9 & 0.5 \\ 0.3 & 0.5 & 0.9\end{array}\right].\label{eq:B}
\end{equation}

\begin{figure}
\centering
\includegraphics[width=13cm]{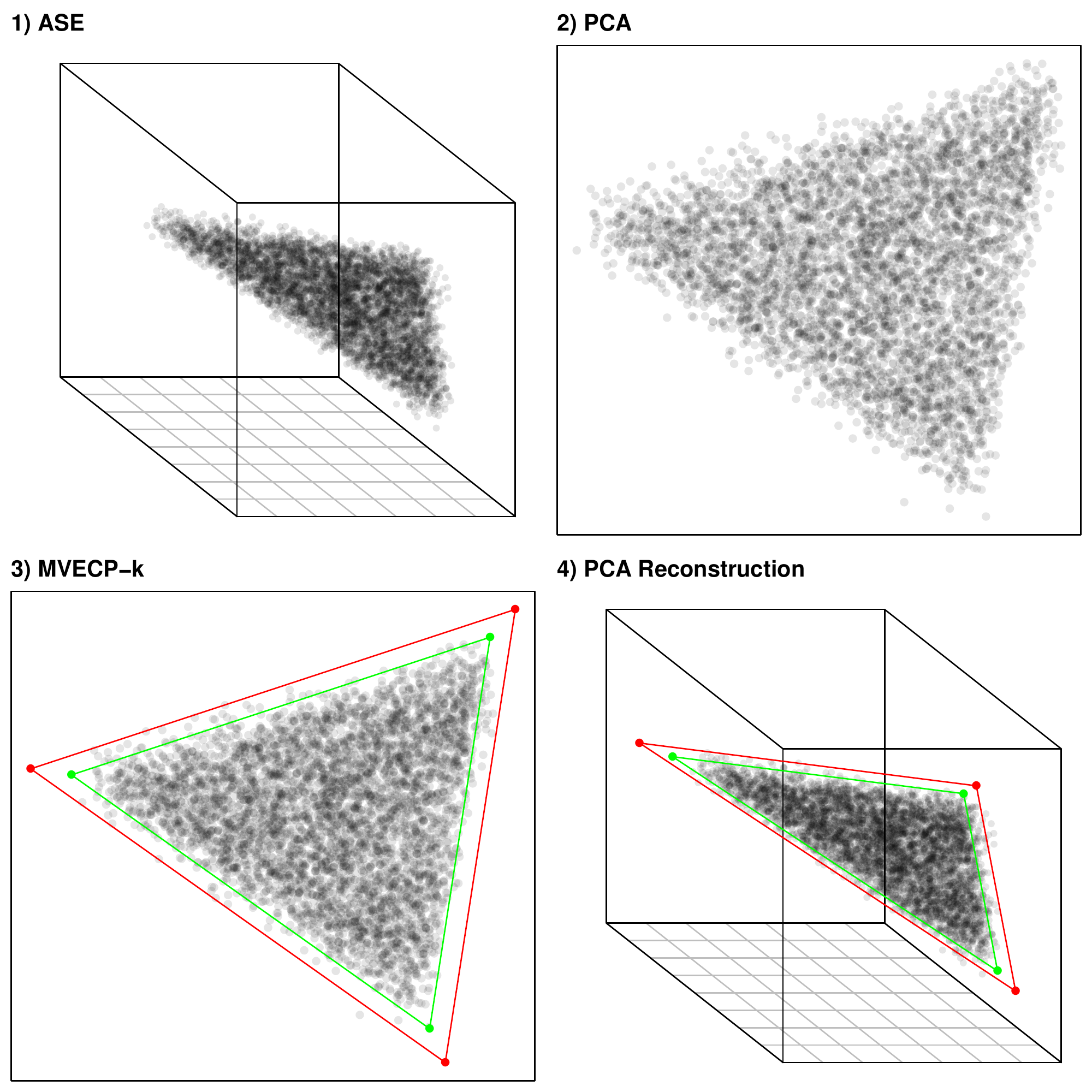}
\caption{Estimation pipeline. 1) Construct an ASE of the graph into $d$ dimensions. 2) Obtain the $d-1$ principal components of the data. 3) Compute the MVECP-$k$ around the data (red) and, optionally, shrink the polytope by a factor proportional to $n^{-1/2} \log^{1/2}(n)$ (green). 4) Apply the inverse PCA rotation and translation to recover the polytope in $d$-dimensional space. The vertices of the polytope provide an estimate for $B$.}\label{fig:pipeline}
\end{figure}
Intuitively, the figure suggests an obvious `estimation pipeline': First, construct the ASE into $\R^d$ (note that often, like in this example, $d=k$, but in general $d \leq k$). Second, find the principal $(d-1)$-dimensional hyperplane, using principal component analysis (PCA), and project the points onto the plane, as shown in Figure~\ref{fig:pipeline}b). Third, find the \emph{minimum volume enclosing convex $k$-polytope} (MVECP-$k$) around the projected points, as shown in Figure~\ref{fig:pipeline}c) (in red). Due to noise, this polytope can be `too large'; in Section~\ref{sec:shrinking} we will propose a correction (in green). Finally, reconstruct the polytope in $\R^d$ to obtain an estimate of $B$. More formally, we will prove consistency of the following procedure:


\begin{definition}[Spectral estimation of the mixed membership stochastic blockmodel]\label{def:spectral_estimation}
Assuming $k$ and $d$ are known,
\begin{enumerate}
\item Let $\tilde X = [\tilde X_1, \ldots, \tilde X_n]^T$ denote the $d-1$ principal components of $\hat X$, that is, $\tilde X = (\hat X - \hat M) U_{\hat C}$ where $\hat M = [n^{-1} \sum_{i=1}^n \hat X_{i}, \ldots, n^{-1} \sum_{i=1}^n \hat X_{i}]^T \in \R^{n \times d}$,
$\hat C = n^{-1}(\hat X - \hat M)^T (\hat X - \hat M)$, and $U_{\hat C}$ is a matrix containing $d-1$ orthonormal eigenvectors corresponding to the $d-1$ largest eigenvalues of $\hat C$. 
\item Let $\tilde{\mathcal{S}}$ denote the MVECP-$k$ around $\tilde X_1, \ldots, \tilde X_n$, and $\hat{\mathcal{S}} = \{U_{\hat C} x + n^{-1} \sum_{i=1}^n \hat X_{i}: x \in \tilde{\mathcal{S}}\}$ the reconstructed convex $k$-polytope in $\R^d$.  
\item Let $\hat B = \hat V \hat V^T$, where $\hat V = [\hat V_1, \ldots, \hat V_k]^T$ and $\hat V_1, \ldots, \hat V_k$ are the vertices of $\hat{\mathcal S}$. 
\item For $i \in 1, \ldots, n$, let $\hat \pi_i \in \boldsymbol{\Delta}^{k-1}$ be a vector satisfying $\hat X_i = \sum_{j=1}^k \hat \pi_{ij} \hat V_j$ and finally
\item Let $\hat \alpha$ denote the conditional maximum likelihood estimate
\[\hat \alpha = \underset{\alpha \in \R_+^k} \argmax \: \text{B}(\alpha)^{-1}\prod_{i=1}^n \prod_{j=1}^k \hat \pi_{ij}^{\alpha_j -1},\]
where \[\mathrm{B}(\alpha) = \frac{\prod_{j=1}^k \Gamma(\alpha_j)}{\Gamma\left(\sum_{j=1}^k \alpha_j\right)}.\]
\end{enumerate}
\end{definition}


\begin{figure}
\centering
\includegraphics[width=13cm]{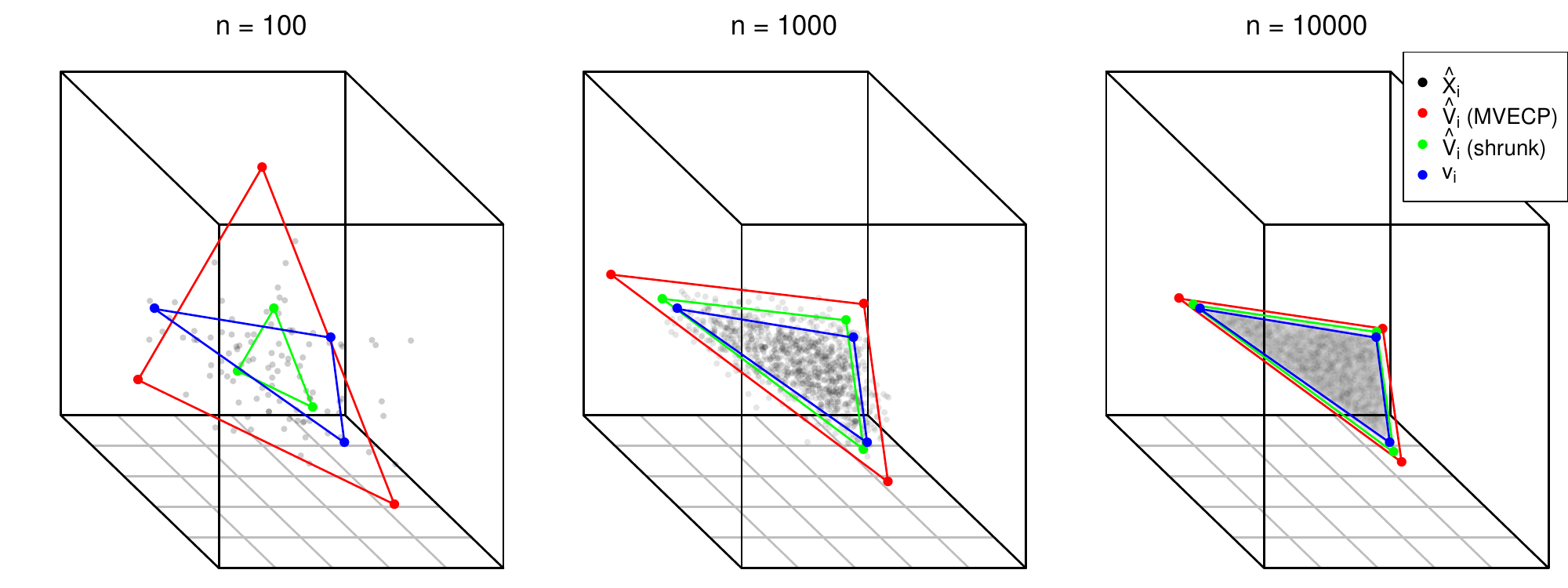}
\caption{ASE of a random realisation of the mixed membership stochastic blockmodel, for three different values of $n$, where $B$ is $3 \times 3$ with full rank (given explicitly in main text), and $\alpha = (1,1,1)$. The grey point cloud in $\R^3$ shows the estimated latent positions $\hat X_1, \ldots, \hat X_n$, the MVECP-3 is shown in red, the same polytope shrunk by a factor proportional to $n^{-1/2} \log^{1/2}(n)$ is shown in green, and the true vertices of $\mathcal S$ are shown in blue. Further details in main text.
}\label{fig:MVES_figure_ns}
\end{figure}

\begin{figure}
\centering
\includegraphics[width=13cm]{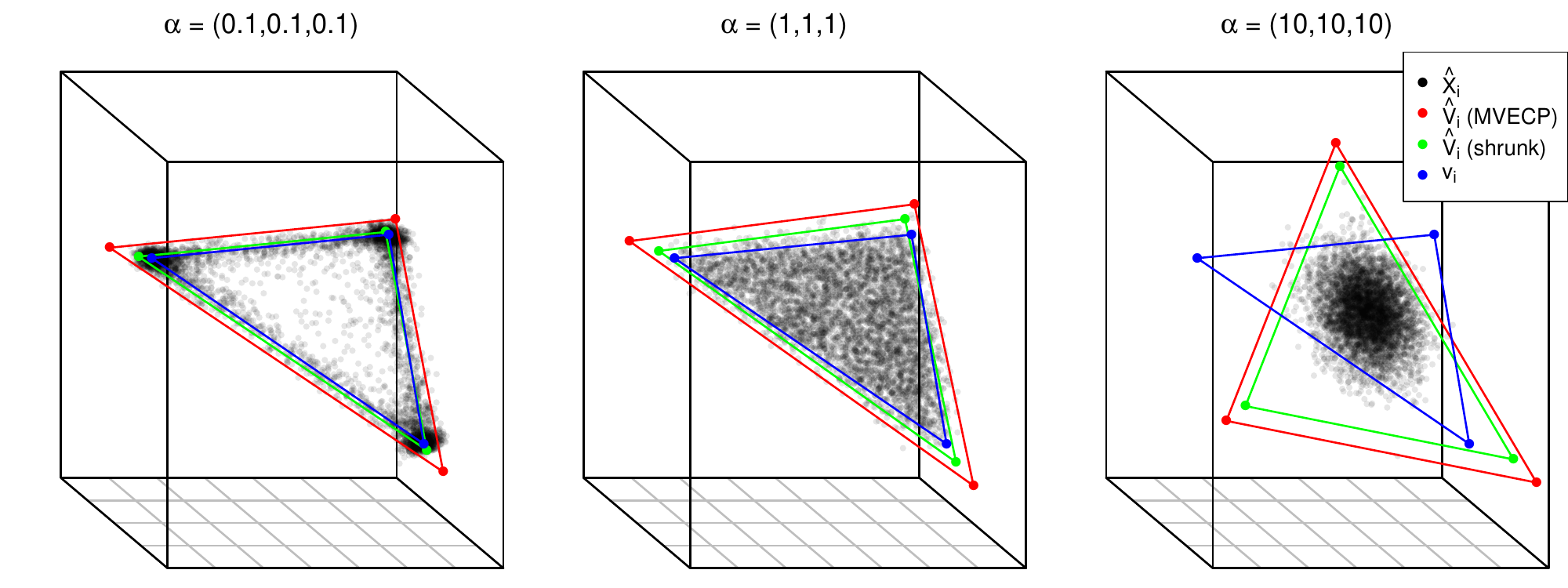}
\caption{ASE of a random realisation of the mixed membership stochastic blockmodel, for three different values of $\alpha$, where $B$ is $3 \times 3$ with full rank (given explicitly in main text), and $n=5000$. The grey point cloud in $\R^3$ shows the estimated latent positions $\hat X_1, \ldots, \hat X_n$, the MVECP-3 is shown in red, the same polytope shrunk by a factor proportional to $n^{-1/2} \log^{1/2}(n)$ is shown in green, and the true vertices of $\mathcal S$ are shown in blue. Further details in main text.}\label{fig:MVES_figure_alphas}
\end{figure}

While computing the MVECP-$k$ is in theory NP-hard \citep{packer2002np}, there exist efficient approximate solutions in practice. We have found the hyperplane-based algorithm by \citet{lin2016fast} fit for purpose, with a standard computer taking under a second to return an approximate MVECP-$k$ given, e.g., $n=10,000$ points in $\R^3$. We will say MVECP-$k$ to mean either the actual optimum or the practical approximation: it will always be clear from the context which is meant.

Figures~\ref{fig:MVES_figure_ns} and \ref{fig:MVES_figure_alphas} show how different quality spectral estimates are obtained depending on the values of $n$ and $\alpha$. A more quantitative analysis is presented later (Figure~\ref{eq:boxplot_error}). In each plot, the grey points are the estimated latent positions $\hat X_1, \ldots, \hat X_n$. The MVECP-$k$ and shrunk MVECP-$k$ are shown in red and green, as before, but now the true polytope, $\mathcal{S}$, is also shown, in blue. The parameter vector $\alpha$ controls where the true $X_i$ concentrate within $\mathcal{S}$. In particular, when all coordinates are commensurate and high (respectively, low) the mass concentrates towards the centre (respectively, the edges). As could have been expected, estimates improve as $n$ increases and each coordinate of $\alpha$ decreases.

To summarize our main result, 
our proposed estimates for $B$, $\pi_1, \ldots, \pi_n$ and $\alpha$ are consistent if $B$ is positive definite ($k=d$). To state the result in full, including allowing $B$ to be non-negative definite ($d \leq k$), let $\lVert \cdot \rVert$, $\lVert \cdot \rVert_{\mathrm{F}}$ denote the Euclidean and Frobenius norms respectively; let $\ominus$ denote the symmetric difference of sets and $\lambda$ Lebesgue measure; finally, let $\mathrm{O}(d)$ denote the orthogonal group $\{W \in \R^{d \times d}: W^T W = W W^T = I\}$ where $I$ is the identity matrix, $S_k$ the symmetric group on $k$ letters, and $M^{(\rho)}$, $v^{(\rho)}$ the row-column permutation (respectively, direct permutation) of a $k \times k$ matrix $M$ (respectively, a $k$-dimensional vector $v$) by $\rho \in S_k$, e.g. $M^{(\rho)}_{ij} = M_{\rho(i)\rho(j)}$. Then:
\begin{theorem}[Consistency of spectral estimates]\label{thm:main_theorem}
$\hat{\mathcal{S}}$ is a consistent estimator for $\mathcal{S}$, up to identifiability constraints, in the sense that, for any $\delta > 0$, 
\begin{equation}
\underset{n \rightarrow \infty} {\lim} \Prob\left[\underset{W \in \mathrm{O}(d)}{\min}\{\lambda(W \hat{\mathcal{S}} \ominus \mathcal S)\} \leq \delta\right] = 1. \label{eq:main_theorem}
\end{equation}
If $k = l$, then $\hat B$ is a consistent estimator for $B$, up to row-column permutations, i.e., for any $\delta >0$,
\[\lim_{ n\rightarrow \infty} \Prob\left\{\underset{\rho \in S_k}{\min}(\lVert \hat B^{(\rho)} - B\rVert_{\mathrm{F}}) \leq \delta\right\} = 1. \]
If additionally $d=l=k$ ($B$ is positive definite), then $\hat{\pi}_1, \ldots, \hat{\pi}_n$ and $\hat \alpha$ are consistent estimators of $\pi_1, \ldots, \pi_n$ and $\alpha$ respectively, up to permutation, i.e. for any $\delta >0$,
\begin{equation}
\lim_{ n\rightarrow \infty} \Prob \left[\underset{\rho \in S_k}{\min}\left\{ \max \lVert \hat{\pi}^{(\rho)}_i - \pi_i \rVert\right\} \leq \delta\right] = 
\lim_{ n\rightarrow \infty} \Prob\left\{\underset{\rho \in S_k}{\min}\left(\lVert \hat \alpha^{(\rho)} - \alpha \rVert \leq \delta\right)\right\} = 1. \label{eq:pi_estimate}
\end{equation}

\end{theorem}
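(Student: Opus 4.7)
The strategy exploits Lemma~\ref{lem:mmsbm_rdpg} to reduce to an RDPG and then combines uniform consistency of the ASE with continuity of the MVECP-$k$ operation. From Lemma~\ref{lem:mmsbm_rdpg} the latent positions $X_i = \Sigma^{1/2} U^T \pi_i$ realise $(X,A) \sim \text{RDPG}(F)$. The $2 \to \infty$ norm consistency result of \citet{lyzinski2017community} then yields, with probability tending to one, an orthogonal $W \in \mathrm{O}(d)$ such that $\max_i \lVert \hat{X}_i - W X_i \rVert$ is at most of order $n^{-1/2} \log^{1/2} n$. Because the centring and the projection onto the empirical top-$(d-1)$ principal subspace are smooth functionals of the data matrix whose sample versions converge to their population counterparts at the same rate (by concentration of $n^{-1} \sum_i \hat{X}_i$ around $\E X_i$ and of $\hat{C}$ around $\mathrm{Cov}(X_i)$), this bound transfers to the PCA coordinates: there exists an isometry $\tilde{T}$ with $\max_i \lVert \tilde{X}_i - \tilde{T} \tilde{X}_i^* \rVert$ tending to zero in probability, where $\tilde{X}_i^*$ denotes the population PCA transform of $X_i$.

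I expect the main obstacle to be the second step: showing that $\tilde{\mathcal{S}}$ converges to the (PCA image of the) true polytope $\mathcal{S}$. I would argue in two stages. First, for the noiseless cloud $\{\tilde{X}_i^*\}$: since the $\pi_i$ are i.i.d.\ Dirichlet with full support on $\boldsymbol{\Delta}^{k-1}$, this cloud becomes Hausdorff-dense in $\mathcal{S}$ almost surely, and $\mathcal{S}$ itself is a valid enclosing $k$-polytope. Any $k$-polytope of strictly smaller volume must miss a positive-measure portion of $\mathcal{S}$ and hence, for large $n$, fail to enclose some sample point; so the noiseless MVECP-$k$ applied to $\{\tilde{X}_i^*\}$ converges to $\mathcal{S}$ in Hausdorff distance, and therefore in Lebesgue symmetric difference. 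A perturbation argument then transfers the conclusion to $\tilde{\mathcal{S}}$: the noisy cloud lies within vanishing Hausdorff distance of the noiseless one (by the first step), and the MVECP-$k$ problem is continuous in its input point set via its enclosure constraints and volume objective (volume is lower semicontinuous on the compact set of $k$-polytopes satisfying a uniform outer bound), so $\tilde{\mathcal{S}}$ and its noiseless analogue differ by a vanishing amount. Lifting through the inverse PCA yields~(\ref{eq:main_theorem}).

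The stronger conclusions follow by reading off parameters from $\hat{\mathcal{S}}$. When $k = l$, $\mathcal{S}$ has exactly $k$ distinct vertices, and on a neighbourhood of $\mathcal{S}$ the map from a convex $k$-polytope to its unordered vertex set is Hausdorff-continuous; hence some $\rho \in S_k$ satisfies $\hat{V}_j \to V_{\rho(j)}$, which gives $\hat{B} = \hat{V}\hat{V}^T \to VV^T = B$ up to the row-column permutation $\rho$ (in Frobenius norm, by continuity of the Gram-matrix map). If moreover $d = l = k$ then $B$ is positive definite, the vertex matrix is invertible, and each $\hat{\pi}_i$ is the unique barycentric coordinate of $\hat{X}_i$ with respect to $\hat{V}_1, \ldots, \hat{V}_k$; since barycentric coordinates are Lipschitz in $(V, X)$ on a neighbourhood of the truth, the uniform bounds from the first step give $\max_i \lVert \hat{\pi}^{(\rho)}_i - \pi_i \rVert \to 0$ in probability. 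Consistency of $\hat{\alpha}$ finally follows by a standard $M$-estimation argument: the MLE based on the exact $\pi_i$ is consistent for $\alpha$ (classical Dirichlet MLE theory), the Dirichlet log-likelihood is equicontinuous in $(\alpha, \pi_1, \ldots, \pi_n)$ on compacta, so the uniform perturbation $\pi_i \mapsto \hat{\pi}_i$ moves the maximiser by a vanishing amount.
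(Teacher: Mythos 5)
Your overall architecture matches the paper's: uniform control of $\max_i\lVert W\hat X_i - X_i\rVert$ from \citet{lyzinski2017community}, transfer to the PCA coordinates via concentration of the sample mean and covariance (the paper makes this precise with the Davis--Kahan variant of \citet{yu2015useful}), a two-sided argument for the MVECP-$k$, and then vertex convergence feeding $\hat B$, the $\hat\pi_i$ and $\hat\alpha$. The one step that does not hold as stated is your appeal to continuity of ``the MVECP-$k$ problem'' in its input point set. The minimum volume enclosing $k$-polytope is the $\argmin$ of an optimisation problem whose solution need not be unique, and an $\argmin$ is not in general a continuous function of the data even when the optimal value is; lower semicontinuity of the volume objective gives convergence of the optimal \emph{value}, not of the optimiser, so your first-stage (noiseless) analysis does not transfer to $\tilde{\mathcal S}$ by a generic perturbation argument.

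The gap is repairable with ingredients you already have, and the repair is exactly what the paper does: work with the noisy cloud directly and sandwich $\tilde{\mathcal S}$ by two inequalities rather than by continuity of the optimiser. For the upper bound, the polytope obtained by pushing each facet hyperplane of the (PCA image of the) true polytope $\mathcal R$ outward by the uniform estimation error is feasible for the noisy problem, so $\lambda(\tilde{\mathcal S})\le\lambda(\mathcal R)+o(1)$. For the lower bound, $\tilde{\mathcal S}$ contains the convex hull $\mathcal H$ of the sample points nearest each vertex, and since sample points fall arbitrarily close to every vertex with probability tending to one, $\lambda(\tilde{\mathcal S}\cap\mathcal R)\ge\lambda(\mathcal H\cap\mathcal R)\ge\lambda(\mathcal R)-o(1)$. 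Combining gives $\lambda(\tilde{\mathcal S}\ominus\mathcal R)\to0$ on the intersection of the high-probability events, which is the content of the paper's display \eqref{eq:pca_formula}. Everything downstream in your proposal (vertex convergence when $k=l$, barycentric coordinates when $d=k=l$, and the $M$-estimation step for $\hat\alpha$) agrees in substance with the paper, which obtains vertex convergence from symmetric-difference convergence by the explicit ball construction of Figure~\ref{fig:vtx_convergence_proof} rather than by invoking Hausdorff continuity of the vertex map; your route requires the additional (true, but unproved) observation that for convex bodies of full dimension in the affine hull, symmetric-difference convergence to a fixed body implies Hausdorff convergence.
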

The proof is relegated to the appendix as a number of technicalities distract from the main mathematical point: that because the MVECP-$k$ encloses all points by definition, Theorem~\ref{thm:main_theorem} is evidently only possible if the maximum deviation of any $\hat X_i$ to its true value $X_i$ can be controlled, probabilistically. The result we need is given by \citet{lyzinski2017community}. Let $E_n$ denote the event that 
\begin{equation}
\underset{i \in \{1, \ldots, n\}}{\max} \: \lVert W \hat X_i - X_i \rVert \leq d_n, \label{eq:bound}
\end{equation}
for some $W \in \mathrm{O}(d)$, where
 \[d_n=\frac{c d^{1/2}\log^2(n)}{\sqrt{n}} \rightarrow 0,\] 
and $c$ is some fixed constant. Then, $\Prob\left(E_n\right) \rightarrow 1$. 

\section{Shrinking the polytope}
\label{sec:shrinking}
As Figures~\ref{fig:MVES_figure_ns} and \ref{fig:MVES_figure_alphas} illustrate, the MVECP-$k$ has a tendency to be too large since, obviously, it is susceptible to outliers. In this section, we derive a rate (as a function of $n$) by which the polytope can be shrunk, towards the centre of the point cloud, for better results. For consistency to be preserved, it is necessary and sufficient that this rate tends to zero. Although our recommendation is the same when $l<k$, to simplify the following discussion  it is assumed that $k=l$ (a common scenario).

Again we appeal to the theory of random dot product graphs, this time a central limit theorem by \citet{athreya2016limit}: there exists a sequence of orthogonal matrices $W_n \in O(d)$ such that for any $i \in \{1, \ldots, n\}$ and $x \in \mathcal{X}$,
\[\mathcal{L}\left\{n^{-1/2} (W_n \hat X_i - X_i)\mid (X_i=x)\right\} \rightarrow \text{Normal}\{0, \Psi(x)\},\]
where the covariance $\Psi(x)$ is fixed as a function of $x$. Furthermore, at fixed indices $i_1, \ldots, i_m$ and fixed points $x_j \in \mathcal{X}, j = i_1, \ldots, i_m$, the random vectors $n^{-1/2} (W_n \hat X_j - X_j)\mid (X_j=x_j), j = i_1, \ldots, i_m$ are asymptotically independent.

We use the result that $\max(Z_j)/\log(m) \overset{a.s.} \rightarrow c$, for some constant $c > 0$ if $Z_1, \ldots, Z_m$ are i.i.d. Chi-square random variables \citep[Example 3.5.6]{embrechts2013modelling}. 
Choosing $x_j = x, j = i_1, \ldots, i_m$, for any $x \in \mathcal{X}$, this implies that the maximal coordinate of $n^{-1/2} (W_n \hat X_j - x)\mid (X_j=x), j = i_1, \ldots, i_m$, along any direction, decreases as $n^{-1/2}\log^{1/2}(n)$ (the marginals along this direction are independent zero-mean Gaussian random variables, and so their squares are independent Chi-square random variables up to rescaling). Heuristically, we might then assume that the error in the MVECP-$k$ vertices relative to the vertices $v_1, \ldots, v_k$ of $\mathcal S$ decrease as $n^{-1/2}\log^{1/2}(n)$ too, by considering in turn subsets of $\{X_i\}$ that fall close to each  $v_i$ (i.e. letting $x = v_i$ in the above argument).

\begin{figure}
\centering
\includegraphics[width=13cm]{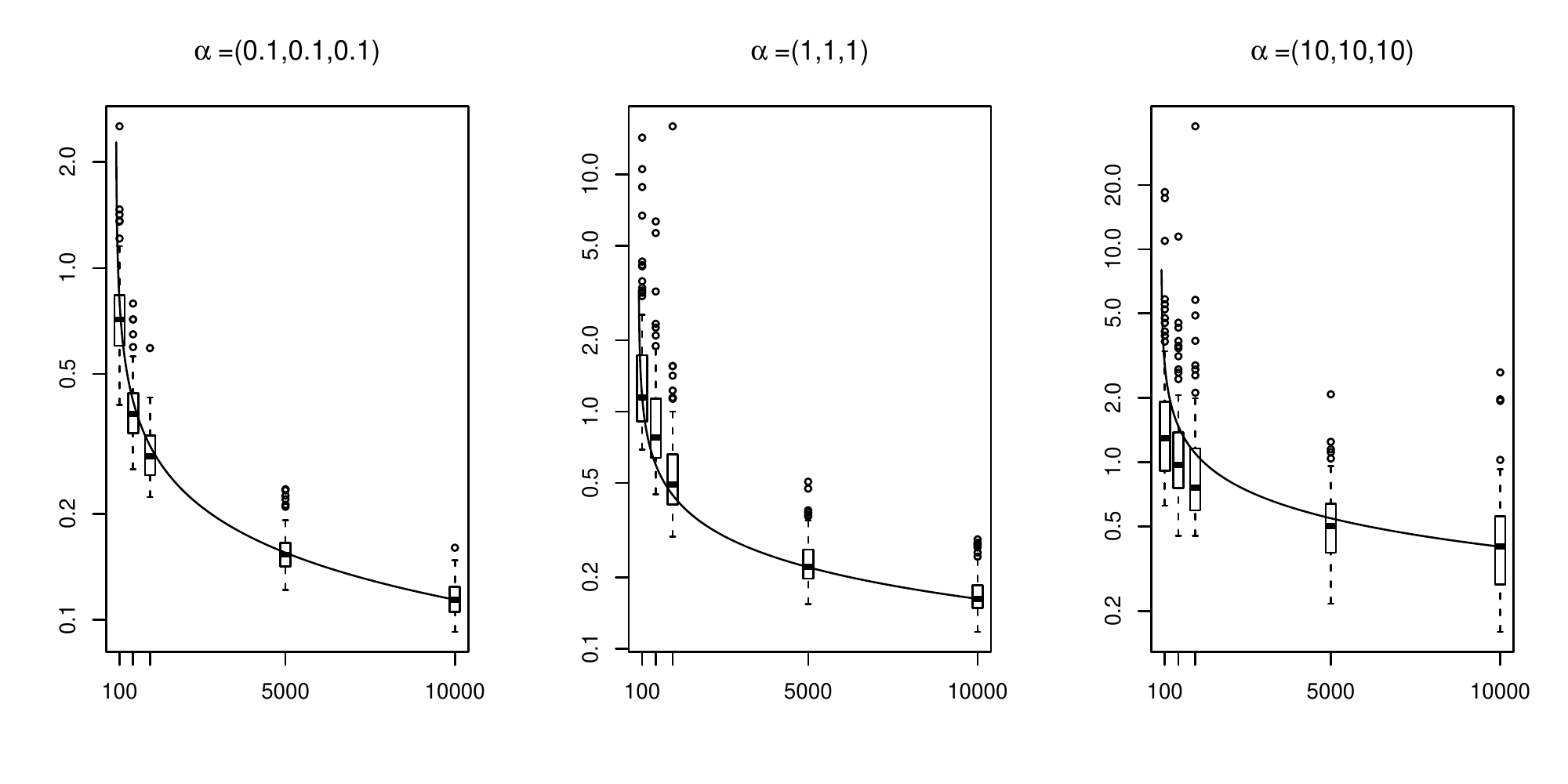}
\caption{Boxplots of the maximum distance, for $n = 100, 500, 1000, 5,000, 10,000$, between a vertex of the MVECP-$k$ $\hat{\mathcal{S}}$ and the corresponding true vertex of $\mathcal S$, for three different values of $\alpha$, where $B$ is $3 \times 3$ with full rank (given explicitly in main text), based on 100 simulations for each $n$. The black curves are proportional to $n^{-1/2}\log^{1/2}(n)$, with the constant chosen in each of the three panels so that the curve and the sample median meet at $n = 10,000$. The y-axis is on the log-scale. Further details in main text.}\label{fig:vertex_distance}
\end{figure}
Figure~\ref{fig:vertex_distance} shows samples of the simulated maximum distance between any of the vertices of the MVECP-$k$ $\hat{\mathcal{S}}$ and the corresponding vertex of $\mathcal S$, for different values of $n$ and $\alpha$, and $B$ given as usual by \eqref{eq:B}. The black line shows the rate $n^{-1/2}\log^{1/2}(n)$, which is in surprisingly good agreement with the asymptotic theory. 

A simple plausible avenue for improvement is therefore to shrink the MVECP-$k$ towards the centre of the point cloud by a rate $a n^{-1/2}\log^{1/2}(n)$, for some constant $a\leq 0$. By comparison, for their (entirely different) application, \citet{lin2016fast} suggest shrinking the MVECP-$k$ of points corrupted by noise at a fixed rate $1-\eta=10\%$, which is currently the default in their published code (the choice of symbol and parameterization for $\eta$ is to allow direct comparison with their paper). Obviously, for our application, a fixed value of $\eta$ would result in an inconsistent estimate. 

\begin{figure}
\centering
\includegraphics[width=8cm]{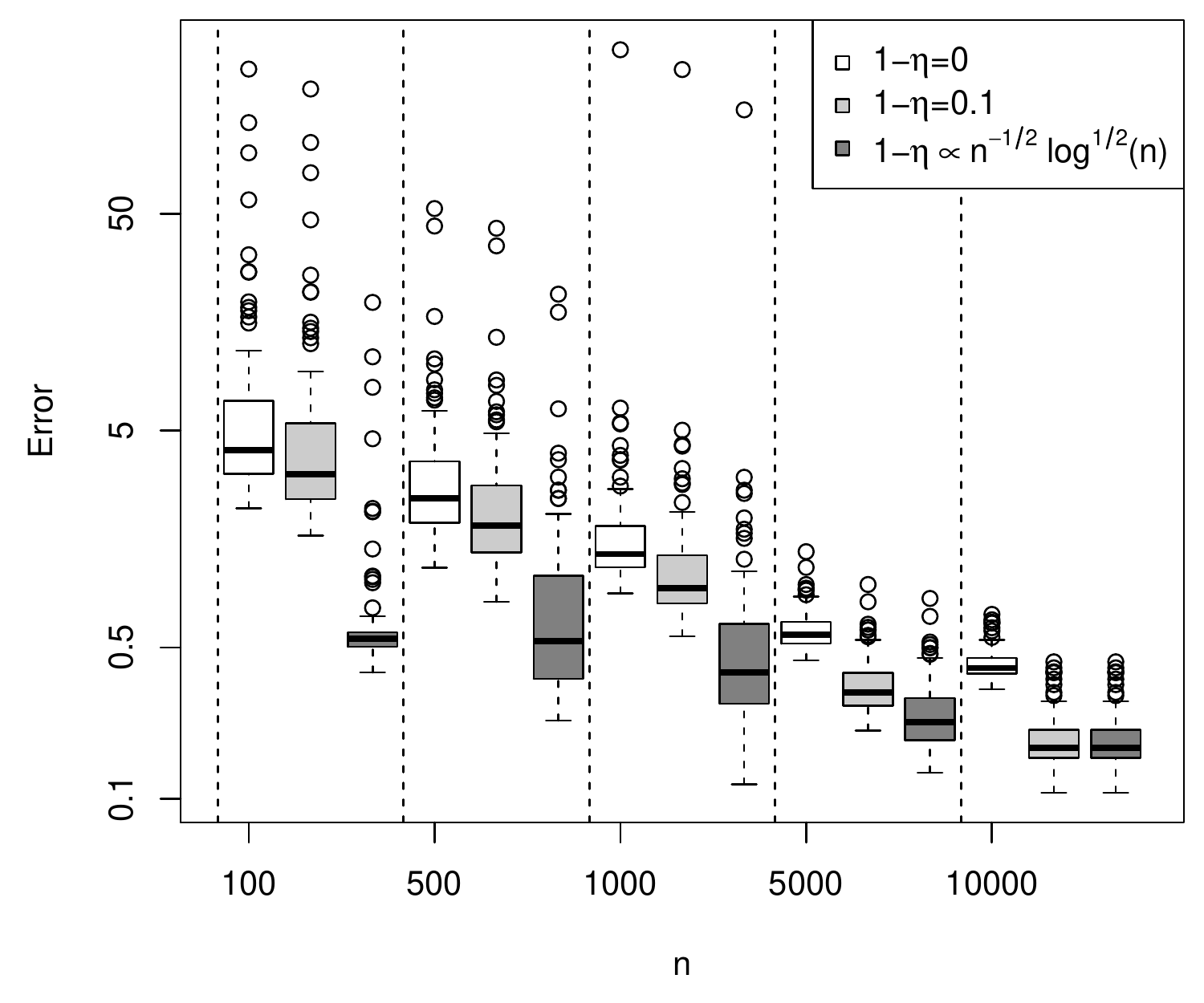}
\caption{Boxplot of the estimation error as $n$ increases for different estimates of $\hat B$, where $B$ is $3 \times 3$ with full rank (given explicitly in main text) and $\alpha = (1,1,1)$ based on 100 simulations for each $n$. The first estimate uses the MVECP-$k$, i.e. $\eta=1$ (in white), the second uses the polytope shrunk by 10\%, i.e. $\eta = 0.9$ (light grey), and the third uses a decreasing shrinking rate proportional to $n^{-1/2}\log^{1/2}(n)$ instead (dark grey). The error is computed as the minimum distance between $\hat B$ and $B$ in Frobenius norm, minimizing over all row-column permutations of $\hat B$. The y-axis is in the log-scale.}\label{eq:boxplot_error}
\end{figure}

Figure~\ref{eq:boxplot_error} shows the estimation improvements that result from shrinking the polytope at the rate derived above. The constant was chosen such that $1-a n^{-1/2}\log^{1/2}(n) = 0.9$  at $n=10,000$, so that shrinking at a fixed or varying $\eta$ results in identical performance at $n=10,000$. Performance is otherwise always superior using a varying $\eta$, rather than fixing $\eta = 0.9$ (default shrinking) or $\eta=1$ (no shrinking). 

\section{Conclusion}\label{sec:conclusion}
This paper presents a simple estimation procedure for the undirected mixed membership stochastic blockmodel, based on its connection to the random dot product graph, using adjacency spectral embedding. We prove consistency under reasonable conditions, and propose a bias correction for finite samples exploiting, respectively, recent $2 \rightarrow \infty$ norm and central limit theorems for random dot product graphs.

\bibliographystyle{apalike}
\bibliography{graph_embedding}
\appendix
\section{Appendix}
\begin{proof}[Proof of Theorem~\ref{thm:main_theorem}]
The sequence $d_n$ and event $E_n$ defined at the end of Section~\ref{sec:spectral_estimation} are now denoted $d_n^{(1)}$ and $E_n^{(1)}$. Let $Y = (X - \mu) U_{\Gamma} = [Y_1, \ldots, Y_n]^T$ where $\mu = [\E(X_1), \ldots, \E(X_1)]^T \in \R^{n \times d}$, $\Gamma = \E[\{X_1 - \E(X_1)\}\{X_1 - \E(X_1)\}^T]$, and $U_{\Gamma}$ is a matrix containing $d-1$ orthonormal eigenvectors corresponding to the $d-1$ largest eigenvalues of $\Gamma$. 

Since $X_i$ have bounded support, the standard mean and covariance estimates are consistent, so that there exists a sequence $d^{(2)}_n \rightarrow 0$ such that for the event $E^{(2)}_n$: 
\[
\left\lVert n^{-1} \sum_{i=1}^n X_{i} - \E(X_1) \right\rVert \leq d^{(2)}_n \: \text{and} \:  \lVert C - \Gamma \rVert_{\text{F}}  \leq d^{(2)}_n,
\]
where  $C = n^{-1}(X-M)^T(X-M)$ and $M = [n^{-1} \sum_{i=1}^n X_{i}, \ldots, n^{-1} \sum_{i=1}^n X_{i}]^T \in \R^{n \times d}$, we have $\Prob\left(E^{(2)}_n\right) \rightarrow 1$. Then $E^{(1)}_n$  and  $E^{(2)}_n$ together imply that
\begin{equation}
\left\lVert n^{-1} \sum_{i=1}^n W\hat X_{i} -  \E(X_1) \right\rVert \leq d^{(3)}_n \: \text{and} \:  \lVert W \hat C W^T - \Gamma \rVert_{\text{F}}  \leq d^{(3)}_n, \label{eq:mean_and_covariance_bound}
\end{equation}
for some sequence $d^{(3)}_n \rightarrow 0$. Since $\mathcal S$ is $(d-1)$-dimensional, $\Gamma$ has $d-1$ positive and one zero eigenvalue. Therefore, under $E^{(1)}_n$  and  $E^{(2)}_n$, by a variant of the Davis-Kahan theorem \citep{yu2015useful}, there exists an orthogonal matrix $O \in \mathrm{O}(d-1)$ such that
 \begin{equation}
\lVert W U_{\hat C} O^T - U_{\Gamma}\rVert_{\text{F}} \leq 2^{3/2} \lambda_{d-1}^{-1} d^{(3)}_n = d^{(4)}_n \rightarrow 0, \label{eq:UhatCbound}
\end{equation} 
where $U_{\hat C}$ is a matrix containing $d-1$ orthonormal eigenvectors corresponding to the $d-1$ largest eigenvalues of $\hat C$, and $\lambda_{d-1}$ is the smallest non-zero eigenvalue of $\Gamma$. Recall from Definition~\ref{def:spectral_estimation} that $\tilde X_1, \ldots, \tilde X_n$ are the PCA projections of $\hat X_1, \ldots, \hat X_n$ respectively. We have,
$(O \tilde X_i - Y_i) = V^{(1)}_{i} + V^{(2)}_{i}$
where
\begin{align*}
V^{(1)}_{i}(n) &= O U_{\hat C}^T\left\{(\hat X_i - W^T X_i) + \left(W^T \E(X_1) - n^{-1} \sum_{i=1}^n \hat X_{i}\right)\right\},\\
V^{(2)}_{i}(n) &= \left(O U_{\hat C}^T W^T - U_{\Gamma}^T\right)\{X_i - \E(X_1)\}.
\end{align*}
Under $E^{(1)}_n$  and  $E^{(2)}_n$, we have $\lVert V^{(1)}_{i} \rVert \leq d^{(2)}_n + d^{(3)}_n$, using $\lVert B x \rVert_{2} \leq \lVert B \rVert_{2} \lVert x \rVert$ where $\lVert \cdot \rVert_{2}$ denotes the spectral norm of a matrix and noting that $\lVert O U_{\hat C}^T \rVert_{2} = 1$, whereas $\lVert V^{(2)}_{i} \rVert \leq d^{(4)}_n c^*$, where $c^*$ is the maximum possible norm of $X_i - \E(X_1)$ (using the Cauchy-Schwarz inequality for the Frobenius norm), and therefore $\lVert O \tilde X_i - Y_i\rVert \leq d^{(2)}_n + d^{(3)}_n + d^{(4)}_n c^* = d^{(5)}_n \rightarrow 0$.

Let $\mathcal{R} = \{U_\Gamma^T [x - \E(X_1)]: x \in \mathcal{S} \}$ denote the support of $Y_i$, a $(d-1)$-dimensional convex $l$-polytope with vertices $w_1, \ldots, w_l$. Under $E^{(1)}_n$  and  $E^{(2)}_n$, $O \tilde X_i$ are enclosed in a convex $l$-polytope formed by moving each of the $(d-2)$-dimensional hyperplanes containing a side of $\mathcal{R}$ by $d^{(5)}_n$, away from the polytope centre, and parallel to the original plane.
Under $E^{(1)}_n$  and  $E^{(2)}_n$, the MVECP-$k$ necessarily has a volume at least as small, so that $\lambda(O \tilde{\mathcal{S}}) \leq \lambda(\mathcal{R}) + \delta^{(1)}_n$, where $\lambda$ denotes Lebesgue measure on $\R^{d-1}$, and $\delta^{(1)}_n$ is some sequence that can be chosen so that $\delta^{(1)}_n \rightarrow 0$. 


Since each $Y_i$ has a positive probability of falling within any neighbourhood of $w_j, j = 1, \ldots, l$, there also exists a function $d^{(6)}_n \rightarrow 0$ such that, for the event $E^{(3)}_n$: 
\begin{equation*}
\max_{j\in\{1, \ldots, l\}}\left(\min_{i \in \{1, \ldots, n\}} \lVert Y_i-w_j \rVert \right) \leq d^{(6)}_n, 
\end{equation*}
we have $\Prob\left(E^{(3)}_n\right) \rightarrow 1$. Let $\mathcal H$ denote the convex hull of $H_1, \ldots, H_l$, where
 \[H_j = \underset{x \in \{O \tilde X_i\}}{\argmin} \lVert x - w_j \rVert, \quad j = 1, \ldots, l.\]
Then  $E^{(1)}_n$, $E^{(2)}_n$ and $E^{(3)}_n$ together imply that $\lambda(\mathcal{R} \ominus \mathcal{H}) \leq \delta^{(2)}_n$, where $\delta^{(2)}_n$ is some sequence that can be chosen so that $\delta^{(2)}_n \rightarrow 0$. 

Therefore,   $E^{(1)}_n$, $E^{(2)}_n$ and $E^{(3)}_n$ together imply
\begin{align*}
\lambda(O \tilde{\mathcal{S}} \cup \mathcal{R}) & \leq \lambda(O \tilde{\mathcal{S}} \cup \mathcal{H}) + \delta^{(2)}_n = \lambda(O \tilde{\mathcal{S}}) + \delta^{(2)}_n,\\
\lambda(O \tilde{\mathcal{S}} \cap \mathcal{R}) &\geq \lambda(\mathcal{H} \cap \mathcal{R}) \geq \lambda(\mathcal{R}) - \delta^{(2)}_n,
\end{align*}
using direct set algebra, and the fact that $\mathcal{H} \subseteq O\tilde{\mathcal{S}}$. Hence, 
\begin{align*}
\lambda(O \tilde{\mathcal{S}} \ominus \mathcal{R}) &= \lambda(O \tilde{\mathcal{S}} \cup \mathcal{R}) - \lambda(O \tilde{\mathcal{S}} \cap \mathcal{R})\\
&\leq \lambda(O \tilde{\mathcal{S}}) - \lambda(\mathcal{R}) + 2 \delta^{(2)}_n \leq \delta^{(1)}_n+ 2\delta^{(2)}_n
.\end{align*}
For any $\delta,\epsilon>0$, let $n_1$ be such that $\Prob\left(E^{(i)}_n\right) \geq 1-\epsilon/3$, $i = 1,2,3$, for all $n \geq n_1$. Let $n_2$ be such that $\delta^{(1)}_n+ 2\delta^{(2)}_n \leq \delta$, for all $n \geq n_2$. Then, for any $n \geq \max(n_1, n_2)$, 
\begin{equation}
\Prob\{\lambda(O \tilde{\mathcal{S}} \ominus \mathcal{R}) > \delta\} \leq \Prob\left(\bar E^{(1)}_n \cup \bar E^{(2)}_n \cup \bar E^{(3)}_n\right) \leq \epsilon, \label{eq:pca_formula}
\end{equation}
using Boole's inequality. Therefore $\tilde{\mathcal{S}}$ is a consistent estimate of $\mathcal{R}$ (up to orthogonal transformation). It is not hard to show consistency is preserved after PCA reconstruction, using $\mathcal{S} = \{U_{\Gamma} x + \E(X_1): x \in \mathcal{R}\}$ (recall $\hat{\mathcal{S}} = \{U_{\hat C} x + n^{-1} \sum_{i=1}^n \hat X_{i}: x \in \tilde{\mathcal{S}}\}$) and equations (\ref{eq:mean_and_covariance_bound}-\ref{eq:UhatCbound}).

Now, assume $k = l$, so that $\hat{\mathcal{S}}$ and $\mathcal S$ are two $(d-1)$-dimensional convex $k$-polytopes. Suppose there exists $W \in \text{O}(d)$ such that $\lambda(W \hat{\mathcal{S}} \ominus \mathcal S) \leq \delta$ for some $\delta$, and consider the vertex estimation error $\max_{i\in \{1, \ldots, k\}} \lVert W \hat V_i - v_i \rVert$, choosing the order of the vertices to achieve minimum error. By an argument shown in Figure~\ref{fig:vtx_convergence_proof}, we must have $\max_{i=1, \ldots, n} \lVert W \hat V_i - v_i \rVert \leq \Delta$, where $\Delta$ is a positive number that can be made arbitrarily small by reducing $\delta$ with $\mathcal S$ fixed. Therefore, for any $\Delta > 0$,
\begin{equation}
\underset{n \rightarrow \infty} {\lim} \Prob\left\{\underset{W \in \mathrm{O}(d)}{\min}\left(\max_{i\in \{1, \ldots, k\}} \lVert W \hat V_i - v_i \rVert\right)  \leq \Delta\right\} = 1 \label{eq:consistency_vtx},
\end{equation}
implying that $\hat B$ is a consistent estimator of $B$ up to row-column permutations.

Finally, assume $d = k = l$. As we consistently estimate both $v_1, \ldots, v_k$, see Equation \eqref{eq:consistency_vtx}, and $X_i$, see Equation \eqref{eq:bound}, then $\hat{\pi}_1, \ldots, \hat{\pi}_n$ are collectively consistent estimates of $\pi_1, \ldots, \pi_n$ respectively, proving the LHS of \eqref{eq:pi_estimate}. This in turns implies that $\hat \alpha$ is a consistent estimate of $\alpha$, by the consistency of the maximum likelihood estimator, proving the RHS of \eqref{eq:pi_estimate}.
\end{proof}

\begin{figure}
\centering
\includegraphics[width=8cm]{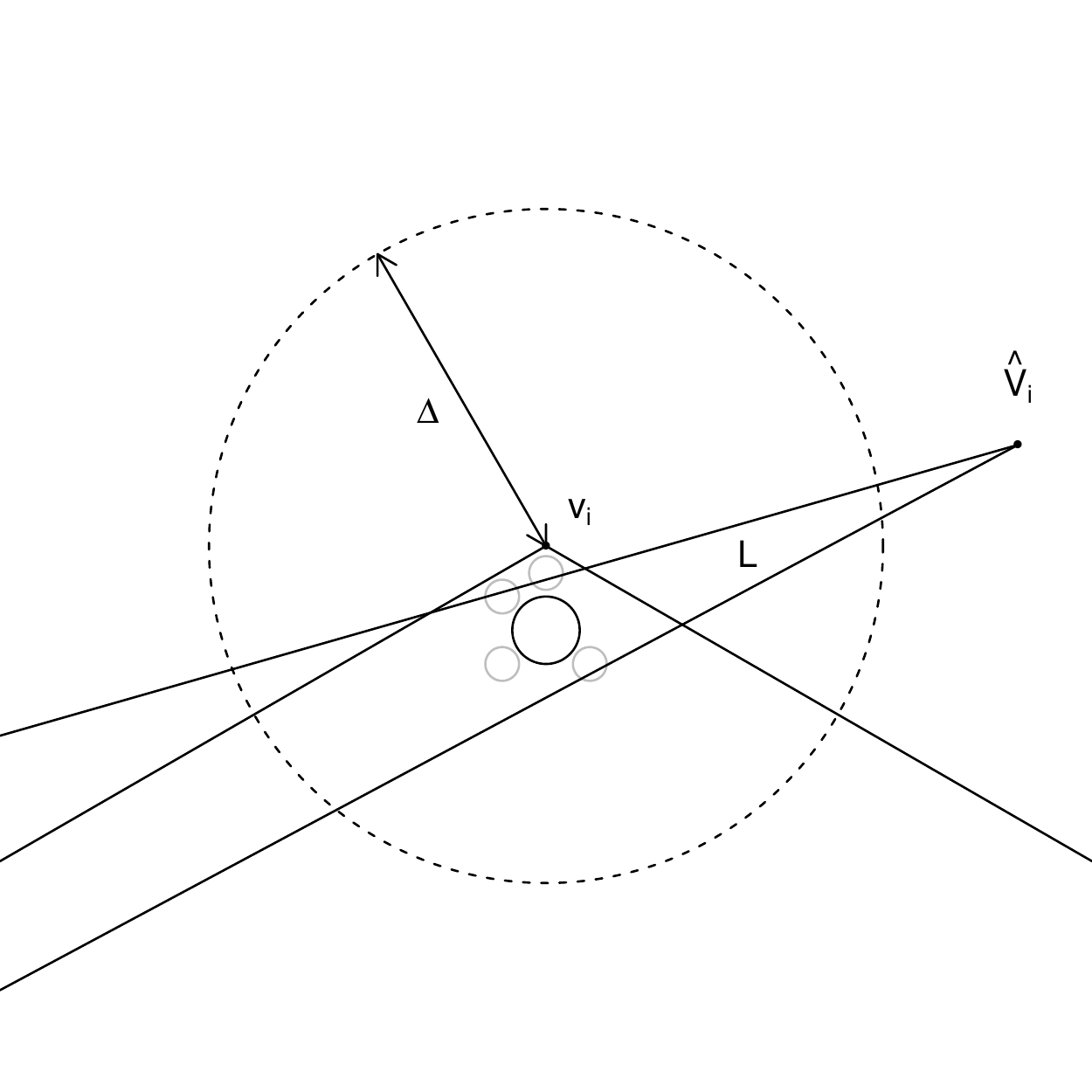}
\caption{Illustration of the proof of the convergence of $\hat V_i$ to $v_i$ given the convergence of $\hat{\mathcal{S}}$ to $\mathcal{S}$ in symmetric difference. If $\lVert W \hat V_i - v_i \rVert > \Delta$, after minimizing over vertex permutations, a ball of radius $\min(\Delta, \lVert v_i - v_{j\neq i} \rVert)$ (dotted line) can be drawn around $v_i$ so that neither $W \hat{\mathcal{S}}$ nor $\mathcal{S}$ have (other) vertices within the ball. The polytope $W \hat{\mathcal{S}}$ must include the whole black ball if it does not exclude any of the grey balls, in which case it must include a portion $L$ of the dotted ball that is not in $\mathcal{S}$. The solid black and grey balls can be chosen, based only on $\Delta$ and $\mathcal{S}$, so that they and $L$ have a volume exceeding $\delta >0$, which is a function of $\Delta$ and $\mathcal{S}$ only. Therefore $\lVert W \hat V_i - v_i \rVert > \Delta$ implies that $W \hat{\mathcal{S}}$ either excludes a grey ball or includes $L$, either of which imply that $W \hat{\mathcal{S}} \ominus \mathcal S > \delta$, so must occur with vanishing probability. See proof of Theorem \ref{thm:main_theorem}.}
\label{fig:vtx_convergence_proof}
\end{figure}

\end{document}